\documentclass{amsart}
\usepackage[cp1251]{inputenc}
\usepackage[english,russian]{babel}
\usepackage{amsmath}
\usepackage{amssymb}
\usepackage{amsfonts}
\def\mscs{65N20,35J05}
\newtheorem{lem}{Lemma}
\newtheorem{thm}{Theorem}
\begin{document}
MSCS \mscs
\thispagestyle{empty}

\title[The Cauchy problem and Hadamard's example in the ring
]{The Cauchy problem and Hadamard's example in the ring}

\author{Yaremko O.E.}

\address{Oleg Yaremko,
\newline\hphantom{iii}Penza State University
\newline\hphantom{iii} Lermontova st., 37,
\newline\hphantom{iii} 440038, Penza, Russia}
\email{yaremki@mail.ru}

\maketitle {\small

\begin{quote}
\noindent{\sc Abstract. }
Integral representation for harmonic function  in the ring is constructed in this work. \\
 We prove the existence and uniqueness of solutions of the Cauchy problem for the Laplace equation in the ring. Integral representation for the solution of the Cauchy problem was found.
\medskip

\noindent{\bf Keywords:} Cauchy problem, Hadamard's example.
 \end{quote}

\section{introduction}
We now examine Hadamard's example of an ill-posed problem \cite{5,6}.
Consider Cauchy problem for the Laplace's equation
\[
\Delta u\equiv \frac{1}{r}\frac{\partial }{\partial r}\left(
{r\frac{\partial u}{\partial r}} \right)+\frac{1}{r^2}\frac{\partial
^2u}{\partial \varphi ^2}=0
\]
\[
\left. {u} \right|_{r=1} =0,
\left. {\frac{\partial  {u}}{\partial n}} \right|_{r=1} =e^{-\sqrt{n}}\frac{cosn\varphi}{n},
\]
$n$ is an add integer. Then
\[u(r,\varphi)=e^{-\sqrt{n}}\frac{r^{n}-r^{-n}}{2n^{2}} cos(n\varphi)
\]
 is the unique solution Cauchy problem. But now consider large $n$. We have \[||e^{-\sqrt{n}}\frac{cosn\varphi}{n}||\rightarrow0
   \]
for any Sobolev norm \cite{3} because the   beats all polynomials in n (from any finite order of derivatives). Hence we can choose $n$ so large that the boundary condition is as close to 0 as we please. But the solution $u(r,\varphi)$ grows large at any fixed line $r =constant$ as $n\rightarrow\infty$ ;
\[
||u(r,\varphi)-u_{0}||\rightarrow\infty
\]
hence $u_{0}\equiv0$ solves
\[
\Delta u\equiv \frac{1}{r}\frac{\partial }{\partial r}\left(
{r\frac{\partial u}{\partial r}} \right)+\frac{1}{r^2}\frac{\partial
^2u}{\partial \varphi ^2}=0
\]
\[
\left. {u} \right|_{r=1} =0,
\left. {\frac{\partial {u}}{\partial n}} \right|_{r=1} =0,
\]
uniquely. So this problem does not depend continuously on the data, and hence is not well-posed. Therefore it is not something we can safely ignore, well- posedness.

In this paper the analytical solution of the Hadamard problem \cite{1} in a ring is
resulted. The received results can be applied for solving inverse boundary
problems, type of the Hadamard problem.
Let us arrive at the function $u\left( {r,\varphi } \right)$ such that $u$ is
the solution of the Laplace's equation
\begin{equation}
\label{eq1}
\Delta u\equiv \frac{1}{r}\frac{\partial }{\partial r}\left(
{r\frac{\partial u}{\partial r}} \right)+\frac{1}{r^2}\frac{\partial
^2u}{\partial \varphi ^2}=0
\end{equation}
in the ring $K$: $r<\left| z \right|<1;z=re^{i\varphi}$, $u(r,\varphi$ is continuous in the closed ring
$\bar {K}$. Moreover, for the function $u$ the conditions hold
\begin{equation}
\label{eq2}
\left. u \right|_{r=1} =g(\varphi),
\quad
\left. {\frac{\partial u}{\partial n}} \right|_{r=1} =h\left( \varphi
\right),
\end{equation}
$h\left( \varphi \right)$ is continuous, $2\pi $- periodic function.
Let us show that the Hadamard problem solution is
\[
u\left( {r,\varphi } \right)=\frac{1}{2\pi }\int_0^\infty {e^{-\varepsilon
}} \int\limits_0^{2\pi } {Re\left[ {\exp \left( {\varepsilon ze^{-it}}
\right)} \right]} g\left( t \right)dt+
\]
\[
+\frac{1}{2\pi }\int_0^\infty {e^{-\varepsilon }} \int\limits_0^{2\pi }
{Re\left[ {\frac{1}{z}\exp \left( {\varepsilon \frac{e^{it}}{z}} \right)}
\right]} g\left( t \right)dt-\frac{1}{2\pi }\int\limits_0^{2\pi } {h\left(
\psi \right)\,d\psi } \cdot \ln r+
\]
\begin{equation}
\label{eq3}
+\frac{1}{2\pi }\int\limits_0^\infty {\frac{1}{\lambda }\left(
{e^{-\frac{\lambda \,}{r}}-e^{-\lambda r}} \right)\int\limits_0^{2\pi }
{e^{\lambda \cos \left( {\varphi -\psi } \right)}\cos \left( {\lambda \sin
\left( {\varphi -\psi } \right)} \right)h\left( \psi \right)\,d\psi }
\,d\lambda } .
\end{equation}

\section{the main lemma}
\begin{lem}If the function $w=f\left( z \right)$ is analytic in the
ring $r<\left| z \right|<1$, and continuous in the closed ring $r\le
\left| z \right|\le1$, then
\[
f\left( z \right)=\frac{1}{2\pi i}\int_0^\infty {e^{-\varepsilon }}
\int\limits_{C_{r_1 } } {e^{\varepsilon \frac{z}{\varsigma }}} f\left(
\varsigma \right)\frac{d\varsigma }{\varsigma }+
\]
\begin{equation}
\label{eq4}
+\frac{1}{2\pi i}\int_0^\infty {e^{-\varepsilon }} \int\limits_{C_{r_2 } }
{\frac{1}{z}} e^{\varepsilon \frac{\varsigma }{z}}f\left( \varsigma
\right)d\varsigma ,r<\left| z \right|<1.
\end{equation}
\end{lem}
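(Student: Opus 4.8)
The plan is to derive \eqref{eq4} from the classical Cauchy integral formula for an annulus by representing the Cauchy kernel $1/(\varsigma-z)$ as a Laplace-type integral in the auxiliary parameter $\varepsilon$. Throughout I take $C_{r_1}$ to be the circle $|\varsigma|=r_1$ and $C_{r_2}$ the circle $|\varsigma|=r_2$, both positively oriented, with $r<r_2<|z|<r_1<1$. Since $f$ is analytic in the open ring, the value of each contour integral is independent of the admissible choice of $r_1,r_2$ by Cauchy's theorem, so it suffices to prove the identity for one such pair of radii separating $z$.

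First I would record the starting point. For $f$ analytic in $r<|\varsigma|<1$ and continuous up to the boundary, and for $z$ with $r_2<|z|<r_1$, the annular Cauchy formula gives
\[
f(z)=\frac{1}{2\pi i}\int_{C_{r_1}}\frac{f(\varsigma)}{\varsigma-z}\,d\varsigma-\frac{1}{2\pi i}\int_{C_{r_2}}\frac{f(\varsigma)}{\varsigma-z}\,d\varsigma,
\]
where the outer circle contributes the holomorphic part and the inner circle, entering with a minus sign, the principal part.

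Next I would rewrite each kernel as an integral over $\varepsilon$. On $C_{r_1}$ one has $|z/\varsigma|=|z|/r_1<1$, hence
\[
\frac{1}{\varsigma-z}=\frac{1}{\varsigma}\cdot\frac{1}{1-z/\varsigma}=\frac{1}{\varsigma}\int_0^\infty e^{-\varepsilon}e^{\varepsilon z/\varsigma}\,d\varepsilon,
\]
while on $C_{r_2}$ one has $|\varsigma/z|=r_2/|z|<1$, hence
\[
-\frac{1}{\varsigma-z}=\frac{1}{z-\varsigma}=\frac{1}{z}\cdot\frac{1}{1-\varsigma/z}=\frac{1}{z}\int_0^\infty e^{-\varepsilon}e^{\varepsilon\varsigma/z}\,d\varepsilon.
\]
Substituting these two identities into the annular formula produces precisely the two iterated integrals displayed on the right-hand side of \eqref{eq4}.

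The one point requiring care—and the main, if modest, obstacle—is the interchange of the $\varepsilon$-integral with the contour integral. I would justify it by Fubini's theorem. On $C_{r_1}$ the modulus of the integrand is dominated by $e^{-\varepsilon(1-|z|/r_1)}\,|f(\varsigma)|/r_1$, and on $C_{r_2}$ by $e^{-\varepsilon(1-r_2/|z|)}\,|f(\varsigma)|/|z|$, using $\operatorname{Re}(z/\varsigma)\le|z|/r_1<1$ and $\operatorname{Re}(\varsigma/z)\le r_2/|z|<1$ respectively. Since $f$ is bounded on the compact circles and the exponents are strictly negative, the modulus is integrable over $[0,\infty)\times C_{r_j}$. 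This absolute integrability legitimizes swapping the order of integration, and reading the resulting double integrals in the order shown in \eqref{eq4} completes the proof.
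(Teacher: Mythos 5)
Your proof is correct and follows essentially the same route as the paper's: both rest on the annular Cauchy formula together with the Laplace representation $1/(1-w)=\int_0^\infty e^{-\varepsilon(1-w)}\,d\varepsilon$ for $\operatorname{Re}w<1$ applied to the kernel $1/(\varsigma-z)$ on the outer and inner circles. The only difference is procedural: you justify the interchange of integrals directly by Fubini with an absolute bound, whereas the paper truncates the $\varepsilon$-integral at $N$, swaps on the finite range, and then estimates the remainder terms $e^{-N(1-\left| z \right|)}/(1-\left| z \right|)$ and $e^{-N(1-r/\left| z \right|)}/(\left| z \right|-r)$ as $N\to\infty$ --- a cosmetic rather than substantive distinction.
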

\begin{proof}. Consider the sequence of function
\[
f_N \left( z \right)=\frac{1}{2\pi i}\int_0^N {e^{-\varepsilon }}
\int\limits_{C_{r_1 } } {e^{\varepsilon \frac{z}{\varsigma }}} f\left(
\varsigma \right)\frac{d\varsigma }{\varsigma }+
\]
\[
+\frac{1}{2\pi i}\int_0^N {e^{-\varepsilon }} \int\limits_{C_{r_2 } }
{\frac{1}{z}} e^{\varepsilon \frac{\varsigma }{z}}f\left( \varsigma
\right)d\varsigma ,r<\left| z \right|<1.
\]
From the Cauchy formula \cite{4} it follows that the contour of integration can be
transferred to a circles $C_1 ,C_r $. We get
\[
f_N \left( z \right)=\frac{1}{2\pi i}\int_0^N {e^{-\varepsilon }}
\int\limits_{C_1 } {e^{\varepsilon \frac{z}{\varsigma }}} f\left( \varsigma
\right)\frac{d\varsigma }{\varsigma }+
\]
\[
+\frac{1}{2\pi i}\int_0^N {e^{-\varepsilon }} \int\limits_{C_2 }
{\frac{1}{z}} e^{\varepsilon \frac{\varsigma }{z}}f\left( \varsigma
\right)d\varsigma ,r<\left| z \right|<1.
\]
Interchanging the order of integration in each summand and integrating
the~inner integrals with respect to $\varepsilon $, we obtain
\[
-\frac{1}{2\pi i}\int\limits_{C_r } {\left[ {\frac{1}{\varsigma
-z}-\frac{e^{-N\left( {1-\frac{\varsigma }{z}} \right)}}{\varsigma -z}}
\right]} f\left( \varsigma \right)d\varsigma ,r<\left| z \right|<1.
\]
We pass to the limit as $N\to \infty $ and get the following estimates:
\[
\left| {\frac{1}{2\pi i}\int\limits_{C_1 } {\left[ {\frac{e^{-N\left(
{1-\frac{z}{\varsigma }} \right)}}{\varsigma -z}} \right]} f\left( \varsigma
\right)d\varsigma } \right|\le \frac{1}{2\pi }\int\limits_{C_1 }
{\frac{e^{-N\left( {1-\left| z \right|} \right)}}{1-\left| z \right|}}
\left| {f\left( \varsigma \right)} \right|\left| {d\varsigma } \right|\le
\frac{e^{-N\left( {1-\left| z \right|} \right)}}{1-\left| z \right|}M,
\]
\[
\left| {\frac{1}{2\pi i}\int\limits_{C_r } {\left[ {\frac{e^{-N\left(
{1-\frac{\varsigma }{z}} \right)}}{\varsigma -z}} \right]} f\left( \varsigma
\right)d\varsigma } \right|\le \frac{1}{2\pi }\int\limits_{C_r }
{\frac{e^{-N\left( {1-\frac{r}{\left| z \right|}} \right)}}{\left| z
\right|-r}} \left| {f\left( \varsigma \right)} \right|\left| {d\varsigma }
\right|\le \frac{e^{-N\left( {1-\frac{r}{\left| z \right|}} \right)}}{\left|
z \right|-r}M,
\]
\[M=\mathop {\max }\limits_{r\le \left| z \right|\le 1} \left| {f\left( z
\right)} \right|.
\]

 From the estimates it follows that the summands

\[
\frac{1}{2\pi i}\int\limits_{C_1 } {\left[ {\frac{e^{-N\left(
{1-\frac{z}{\varsigma }} \right)}}{\varsigma -z}} \right]} f\left( \varsigma
\right)d\varsigma ,
\frac{1}{2\pi i}\int\limits_{C_r } {\frac{e^{-N\left( {1-\frac{\varsigma
}{z}} \right)}}{\varsigma -z}} f\left( \varsigma \right)d\varsigma ,r<\left|
z \right|<1,
\]
tend to 0. This is uniform convergence in $z$ in any of the closed ring
$r_1 \le \left| z \right|\le r_2 $. Thus we have $\mathop {\lim
}\limits_{N\to \infty } f_N \left( z \right)=f\left( z \right).$
\end{proof}
\section{main results}

Consider now the Hadamard problem \cite{1}.
\begin{thm}
Let us arrive at the function $u\left( {r,\varphi } \right)$ such that $u$ is
the solution of the Laplace's equation
\begin{equation}
\label{eq5}
\Delta u\equiv \frac{1}{r}\frac{\partial }{\partial r}\left(
{r\frac{\partial u}{\partial r}} \right)+\frac{1}{r^2}\frac{\partial
^2u}{\partial \varphi ^2}=0
\end{equation}
in the ring $K$:$r<\left| z \right|<1$, $u$  is continuous in the closed ring
$\bar {K}$. Moreover, for the function $u$ the conditions hold
\begin{equation}
\label{eq6}
\left. u \right|_{r=1} =g\left( \varphi \right),
\quad
\left. {\frac{\partial u}{\partial n}} \right|_{r=1} =0,
\end{equation}
$g\left( \varphi \right)$ is continuous, $2\pi $- periodic function,
the solution of the
problem (\ref{eq5})-(\ref{eq6}) is expressed in the form :
\[
u\left( {r,\varphi } \right)=\frac{1}{2\pi }\int_0^\infty {e^{-\varepsilon
}} \int\limits_0^{2\pi } {Re\left[ {\exp \left( {\varepsilon ze^{-it}}
\right)} \right]} g\left( t \right)dt+
\]
\[
+\frac{1}{2\pi }\int_0^\infty {e^{-\varepsilon }} \int\limits_0^{2\pi }
{Re\left[ {\frac{1}{z}\exp \left( {\varepsilon \frac{e^{it}}{z}} \right)}
\right]} g\left( t \right)dt,r<\left| z \right|<1,
\]
\[
z=re^{i\varphi }.
\]
\end{thm}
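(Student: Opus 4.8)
The plan is to reduce the Cauchy problem to the analytic representation supplied by the Main Lemma. First I would recall that every function harmonic in the ring has a Fourier--Laurent expansion
\[
u(r,\varphi)=\alpha_0+\beta_0\ln r+\sum_{n\ge1}\bigl[(\alpha_n r^n+\gamma_n r^{-n})\cos n\varphi+(\beta_n r^n+\delta_n r^{-n})\sin n\varphi\bigr].
\]
Imposing the Cauchy data on $r=1$, the Dirichlet condition $u|_{r=1}=g$ identifies the combinations $\alpha_n+\gamma_n$ and $\beta_n+\delta_n$ with the Fourier coefficients $a_n,b_n$ of $g$, while the homogeneous Neumann condition $\partial u/\partial n|_{r=1}=\partial u/\partial r|_{r=1}=0$ forces $\beta_0=0$, $\alpha_n=\gamma_n$ and $\beta_n=\delta_n$. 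This determines $u$ uniquely as
\[
u(r,\varphi)=a_0+\tfrac12\sum_{n\ge1}(r^n+r^{-n})(a_n\cos n\varphi+b_n\sin n\varphi).
\]
The vanishing of $\beta_0$ is decisive: it means $u$ carries no logarithmic part and is therefore the real part of a single-valued analytic function $F$ in the ring whose Laurent coefficients are exactly the Fourier coefficients of $g$, so that $F$ agrees with $g$ on the unit circle.

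I would then feed this $F$ into the Main Lemma (\ref{eq4}), choosing both contours to be the unit circle $C_1$ (permitted since the integrands are analytic in $\varsigma$) and parametrising $\varsigma=e^{it}$. The two terms of (\ref{eq4}) turn into the two $\varepsilon$-integrals appearing in the statement: the kernel $e^{\varepsilon z/\varsigma}$ becomes $\exp(\varepsilon z e^{-it})$ in the first, and $\tfrac1z e^{\varepsilon\varsigma/z}$ becomes the kernel of the second. Because $g$ is real-valued, the symbol $\mathrm{Re}[\,\cdot\,]$ may be moved past the real density $g(t)$ and placed on the kernel; taking real parts of the identity $u=\mathrm{Re}\,F$ then produces exactly the representation claimed (formula (\ref{eq3}) with $h\equiv0$). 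Harmonicity is automatic, as $u$ is the real part of an analytic function; the boundary value $u|_{r=1}=\mathrm{Re}\,g=g$ is immediate; and the homogeneous Neumann condition reflects the symmetry $c_{-m}=\overline{c_m}$ of the Laurent coefficients of $F$ (equivalently, it can be checked by differentiating the series for $u$ term by term at $r=1$).

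For uniqueness I would take the difference $w$ of two admissible solutions: $w$ is harmonic with $w|_{r=1}=0$ and $\partial w/\partial n|_{r=1}=0$, and substituting its Fourier--Laurent expansion into these two conditions annihilates every coefficient, including the $\ln r$ term, so $w\equiv0$.

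The step I expect to be the genuine obstacle is the interchange of the summation over Laurent modes with the regularising integral $\int_0^\infty e^{-\varepsilon}(\,\cdot\,)\,d\varepsilon$. A naive term-by-term evaluation is illegitimate, since the negative-power part produces $\sum_k r^{-k}$, which diverges for $r<1$; this divergence is precisely the analytic face of the ill-posedness emphasised in the introduction. One must instead keep the Borel-type weight $e^{-\varepsilon}$ intact and invoke the truncation-and-estimate argument already used in the proof of the Main Lemma (the uniform tail bounds $e^{-N(1-|z|)}$ and $e^{-N(1-r/|z|)}$). Some care is also required in the second term to track the factor $e^{it}$ arising from $d\varsigma$, so that its kernel is matched to $\tfrac1z e^{\varepsilon\varsigma/z}$ correctly.
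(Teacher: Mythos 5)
Your proof is correct and takes essentially the same route as the paper: both arguments reduce the theorem to showing that $u$ is the real part of a single-valued analytic function $f$ with $f|_{r=1}=g$, and then apply the Main Lemma with both contours placed on the unit circle and take real parts. You reach that reduction via the Fourier--Laurent expansion (which also hands you uniqueness and makes explicit that the Neumann condition kills the $\ln r$ term), whereas the paper gets it in one line from the polar Cauchy--Riemann relation $v_\varphi = r\,u_r$ and the normalization $v(1,\varphi)=0$; your closing remark about the factor $e^{it}$ produced by $d\varsigma$ in the second kernel identifies a real discrepancy in the paper's printed formula rather than a gap in your argument.
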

\textbf{Proof}. By the Cauchy -- Riemann equations \cite{2}, so that $-iv_\varphi ^/ =ru_r^/ $.We
get $v\left( {1,\varphi } \right)=0$. It follows that we get $\left. f
\right|_{r=1} =g\left( \varphi \right)$ for the function $f=u+iv$.

Substituting $r_1 =r_2 =1$ in (\ref{eq4}), we obtain
\[
f\left( z \right)=\frac{1}{2\pi }\int_0^\infty {e^{-\varepsilon }}
\int\limits_0^{2\pi } {\exp \left( {\varepsilon ze^{-it}} \right)} g\left( t
\right)dt+
\]
\[
+\frac{1}{2\pi }\int_0^\infty {e^{-\varepsilon }} \int\limits_0^{2\pi }
{\frac{1}{z}} \exp \left( {\varepsilon \frac{e^{it}}{z}} \right)g\left( t
\right)dt,r<\left| z \right|<1.
\]
We take a real part of $f\left( z \right)$. Theorem is proved.

\begin{thm}
Let us arrive at the function $u\left( {r,\varphi } \right)$ such that $u$ is
the solution of the Laplace's equation
\begin{equation}
\label{eq5}
\Delta u\equiv \frac{1}{r}\frac{\partial }{\partial r}\left(
{r\frac{\partial u}{\partial r}} \right)+\frac{1}{r^2}\frac{\partial
^2u}{\partial \varphi ^2}=0
\end{equation}
in the ring $K$:$r<\left| z \right|<1$,$ u$ is continuous in the closed ring
$\bar {K}$. Moreover, for the function $u$ the conditions hold
\begin{equation}
\label{eq6}
\left. u \right|_{r=1} =0,
\quad
\left. {\frac{\partial u}{\partial n}} \right|_{r=1} =h(\varphi),
\end{equation}
$g\left( \varphi \right)$ is continuous, $2\pi $- periodic function,
The solution of the
problem (\ref{eq5})-(\ref{eq6}) is expressed in the form:
\[
u\left( {r,\varphi } \right)=-\frac{1}{2\pi }\int\limits_0^{2\pi } {h\left(
\psi \right)\,d\psi } \cdot \ln r+
\]
\[
+\frac{1}{2\pi }\int\limits_0^\infty {\frac{1}{\lambda }\left(
{e^{-\frac{\lambda \,}{r}}-e^{-\lambda r}} \right)\int\limits_0^{2\pi }
{e^{\lambda \cos \left( {\varphi -\psi } \right)}\cos \left( {\lambda \sin
\left( {\varphi -\psi } \right)} \right)h\left( \psi \right)\,d\psi }
\,d\lambda } .
\]
\[
z=re^{i\varphi }.
\]
\end{thm}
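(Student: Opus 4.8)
The plan is to follow the strategy of the preceding theorem almost verbatim: produce an analytic function whose real part is $u$, transcribe the Cauchy data through the Cauchy--Riemann equations, and then apply the main Lemma (\ref{eq4}). First I would choose a harmonic conjugate $v$ of $u$ in the ring and set $f=u+iv$. In polar form the Cauchy--Riemann equations read $u_r=\frac1r v_\varphi$ and $\frac1r u_\varphi=-v_r$, so the Neumann condition $\partial u/\partial n|_{r=1}=h$ becomes a Dirichlet condition for the conjugate, $v_\varphi|_{r=1}=\pm h(\varphi)$; together with $u|_{r=1}=0$ this yields the single boundary value $f|_{r=1}=i\,v|_{r=1}$, where $v|_{r=1}(\varphi)=\pm\int_0^\varphi h(s)\,ds+\mathrm{const}$.

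The decisive structural feature, absent in the first theorem, is that this conjugate is in general multivalued: its increment around $|z|=1$ equals $\pm\int_0^{2\pi}h=\pm 2\pi\bar h$ with $\bar h=\frac1{2\pi}\int_0^{2\pi}h(\psi)\,d\psi$. I would therefore split off a logarithm, writing $f(z)=\pm\bar h\ln z+F(z)$ with $F$ single-valued and analytic in the ring and continuous up to the boundary. The logarithmic summand contributes $\mathrm{Re}(\pm\bar h\ln z)=\pm\bar h\ln r$ to $u$; this is the term $\propto\ln r$ in the assertion, its sign being fixed once the orientation of the normal is fixed (it encodes precisely the constant Fourier mode $n=0$ of $h$). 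On $|z|=1$ the remaining function has the purely periodic boundary value $F|_{r=1}=i\tilde v(\varphi)$, where $\tilde v$ is the single-valued $2\pi$-periodic antiderivative of $h-\bar h$.

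Next I would apply the Lemma to $F$ with $r_1=r_2=1$, as in the first theorem, parametrizing the unit circle by $\varsigma=e^{it}$; this represents $F(z)$ as two $\varepsilon$-integrals with kernels $e^{\varepsilon z e^{-it}}$ and $\frac1z e^{\varepsilon e^{it}/z}$. The task is then to rewrite the result in terms of $h$ itself rather than its antiderivative $\tilde v$. The passage from $\tilde v$ to $h$ is a single integration, and in the integral representation this is realized by the weight $1/\lambda$: after interchanging the $\varepsilon$- and $t$-integrations and making the substitutions $\lambda=\varepsilon r$ in the first summand and $\lambda=\varepsilon/r$ in the second, the factor $e^{-\varepsilon}$ becomes $e^{-\lambda/r}$ and $e^{-\lambda r}$ respectively, and the two summands merge into the single difference kernel $\frac1\lambda\bigl(e^{-\lambda/r}-e^{-\lambda r}\bigr)$. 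Taking the real part and using $\mathrm{Re}\,e^{\lambda e^{i(\varphi-\psi)}}=e^{\lambda\cos(\varphi-\psi)}\cos(\lambda\sin(\varphi-\psi))$ then produces the stated integral. A quick modewise check ($h=\sum h_n e^{in\psi}$, for which $\int_0^\infty\lambda^{n-1}(e^{-\lambda/r}-e^{-\lambda r})\,d\lambda=(n-1)!\,(r^n-r^{-n})$) confirms that this kernel reproduces the expected radial factors $\frac1{2n}(r^n-r^{-n})$ and pins down all constants.

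The main obstacle is exactly this last rearrangement: one must justify interchanging the improper integral $\int_0^\infty d\varepsilon$ with the contour integral, legitimize the insertion of the $1/\lambda$ weight (equivalently, the integration relating $h$ to $\tilde v$), and verify convergence of the resulting $\lambda$-integral both at $\lambda=0$, where $\frac1\lambda(e^{-\lambda/r}-e^{-\lambda r})$ stays bounded because the two exponentials agree to first order, and at $\lambda=\infty$. A secondary subtlety, already flagged, is the careful bookkeeping of the multivalued conjugate so that the logarithmic term appears with the correct coefficient and sign.
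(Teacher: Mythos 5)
Your plan is sound and, at the level of strategy, coincides with the paper's: both reduce the Neumann datum to a Dirichlet datum by integrating $h$ once along the circle and then invoke Theorem~1 (the Lemma with $r_1=r_2=1$). The packaging differs. You work through the harmonic conjugate and split off the multivalued piece $\bar h\ln z$ explicitly; the paper instead poses an auxiliary Dirichlet problem with datum $g(\varphi)=\frac{1}{2\pi}\int_0^{2\pi}\psi\,h(\varphi+\psi)\,d\psi$, applies the previous theorem with $\mathrm{Im}$ in place of $\mathrm{Re}$ (this is exactly your observation that $f|_{r=1}=i\,v|_{r=1}$ is purely imaginary), and recovers $u$ from $\tilde u$ via $\tilde u(r,\varphi)=\frac{1}{2\pi}\int_0^{2\pi}\psi\,u(r,\varphi+\psi)\,d\psi$ together with an integration by parts. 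On the period of the conjugate you are in fact more careful than the paper, which asserts $g'=h$ although this $g$ satisfies $g'=h-\bar h$ with $\bar h=\frac{1}{2\pi}\int_0^{2\pi}h(\psi)\,d\psi$; that discrepancy is precisely what the separate $\ln r$ term absorbs.

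One step of your sketch is misdescribed and would fail if executed literally: the substitutions $\lambda=\varepsilon r$ and $\lambda=\varepsilon/r$ do turn $e^{-\varepsilon}$ into $e^{-\lambda/r}$ and $e^{-\lambda r}$, but they contribute Jacobian factors $1/r$ and $r$, not the weight $1/\lambda$. The $1/\lambda$ is produced solely by the integration by parts that trades the antiderivative $\tilde v$ for $h$: in Fourier modes, $\widehat{\tilde v}(n)=\hat h(n)/(in)$ supplies the $1/n$ that converts $\frac{1}{n!}\int_0^\infty\lambda^{n}e^{-\lambda/r}\,\frac{d\lambda}{r}=r^{n}$ into $\frac{1}{n!}\int_0^\infty\lambda^{n-1}e^{-\lambda/r}\,d\lambda=\frac{r^{n}}{n}$. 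Your own modewise check already contains the correct computation, so the repair is simply to run the whole argument mode by mode (or to integrate by parts before substituting). Relatedly, in the stated formula the $\lambda$-integral acts on all of $h$, and its constant mode contributes $2\bar h\ln r$ by Frullani, $\int_0^\infty\lambda^{-1}(e^{-\lambda/r}-e^{-\lambda r})\,d\lambda=2\ln r$; hence the explicit logarithmic term must be $-\bar h\ln r$, not the $+\bar h\ln r$ your decomposition naturally yields, and you must add and subtract the constant mode to match the asserted form. Both issues are surmountable, so this is a correct plan needing repairs in execution rather than a wrong approach.
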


\textbf{Proof}. Consider the
Cauchy auxiliary problem:
\[
\Delta \tilde {u}\equiv \frac{1}{r}\frac{\partial }{\partial r}\left(
{r\frac{\partial \tilde {u}}{\partial r}}
\right)+\frac{1}{r^2}\frac{\partial ^2\tilde {u}}{\partial \varphi ^2}=0,
\]
\[
\left. {\tilde {u}} \right|_{r=1} =g(\varphi),
\left. {\frac{\partial \tilde {u}}{\partial n}} \right|_{r=1} =0,
\]
where
\[
g( \varphi )=\frac{1}{2\pi }\int\limits_0^{2\pi } {\psi
h\left( {\varphi +\psi } \right)d\psi } .
\]

The solution of the problem (\ref{eq5})-(\ref{eq6}) is by theorem 2:
\[
u\left( {r,\varphi } \right)=\frac{1}{2\pi }\int_0^\infty {e^{-\varepsilon
}} \int\limits_0^{2\pi } {Im\left[ {\exp \left( {\varepsilon ze^{-it}}
\right)} \right]} g(t)dt+
\]
\[
+\frac{1}{2\pi }\int_0^\infty {e^{-\varepsilon }} \int\limits_0^{2\pi }
{Im\left[ {\frac{1}{z}\exp \left( {\varepsilon \frac{e^{it}}{z}} \right)}
\right]} g( t )dt,r<\left| z \right|<1.
\]
Integrating by parts, and using
\[
g'( \varphi)=h\left( \varphi \right),
\]
\[
\widetilde{u}( r,\varphi )=\frac{1}{2\pi }\int\limits_0^{2\pi } {\psi
u\left(r, {\varphi +\psi } \right)d\psi },
\]
we get (\ref{eq3}).

\end{document}